\newcommand{\msccat}{\textbf{Msc}}
\newcommand{\msccatm}{\textbf{Mscm}}
\newcommand{\msicat}{\textbf{Msin}}
\theoremstyle{definition}
\newtheorem{definition}{Definition}[section]
\theoremstyle{theorem}
\newtheorem{proposition}{Proposition}[section]
\title{On a federated architecture for utilizing domain-specific descriptive models\thanks{This work was carried out and funded in the framework of the Labex MS2T. It was supported by the French Government, through the program "Investments for the future" managed by the National Agency for Research (Reference ANR-11-IDEX-0004-02). This work was co-funded by the Airbus Defence and Space company.}}
\author{Freddy Kamdem Simo
	\institute{$^{1,}$ $^2$}
	\email{frks@protonmail.ch}
	\and
	Dominique Ernadote 
	\institute{$^2$Airbus Defence and Space\\
		Elancourt, France}
	\email{dominique.ernadote@airbus.com}
	\and
	Dominique Lenne
	\institute{$^1$Sorbonne Universit\'es, Universit\'e de Technologie de Compi\`egne}
	\institute{Heudiasyc, CNRS, Centre de recherche Royallieu, CS 60319\\
		Compi\`egne, France}
	\email{dlenne@utc.fr}
}
\begin{document}
	\maketitle
	
	\begin{abstract}
		In a Systems Engineering setting, various models are produced using a variety of methods and tools. Focusing on a type of models -- called descriptive models -- which we shall describe, we argue that, while the clarity and precision of models are essential for their exchange and reuse, the way in which the data of these models are defined and the information conveyed is also essential for their (re) utilization -- e.g., for analysis or synthesis purposes. Category Theory has made it possible to link seemingly separate fields or domains, so that anything that can be rewritten in this framework benefits from a level of abstraction and a relational viewpoint -- essential to address complexity. We therefore take advantage of this framework to define a federated architecture for projecting and conveying these models without sacrificing clarity and precision.
		A federated architecture has two important advantages. On the one hand, it unifies these models from their structure; on the other hand, it allows a specific usage (instantiation or interpretation) of this structure within a business domain.  We define the structure of these models as a symmetric multicategory. In particular, we rely on matrices over a semiring to define morphisms and their composition. The choice of matrices is intended to facilitate the application in practice. 
	\end{abstract}

\section{Challenges with metamodels, data formats, interfaces and  mappings}
\label{sec:problemchap4} 
In a Systems Engineering setting, a model of a System-Of-Interest often refers to a partial, but necessary, view or aspect of that system, in pursuit of a goal. This model is not always based on a theory and need not be in many cases, e.g., in creative, preliminary or general communication phases. However, when it is necessary to reason or compute on a model, the formal content (sometimes called semantics or business semantics) of that model should (mathematically) be defined  in order to ensure sound reasoning. We shall see that such a definition is not enough to ensure that different implementations in modelling tools are equivalent or, ideally, functionally identical.
In this paper, we focus on descriptive models -- hereafter referred to simply as models, unless otherwise specified -- by which we mean a model whose structure -- equipped with input and output interaction points -- encodes the main (abstract) structure of the thing to be modelled. We examine how descriptive models produced using a particular tool might be re-utilized, for instance for analysis or synthesis purposes. The question is to what extent descriptive models can be decoupled from the logic of their implementation by a particular tool while retaining their full meaning and being reusable.

One solution for making models transferable and interoperable between some tools is to impose the following:

i) a common metamodel, ontology, schema or modelling language;

ii) a common data format;

iii) a common interface.

Metamodels support, through the definition of syntax, the graphical representation of models and descriptions of their semantics. However, when models are used outside the tool in which they originated, some assumptions present in the metamodels may be missing. For instance, Statecharts \cite{harel1987statecharts}, a visual modelling technique, have a plethora of existing semantics implemented within tools, falling into three main categories \cite{eshuis2009reconciling}.
Regarding the implementation of semantics by modelling tools, a missing assumption relating to a semantics will either be implicit in the metamodel, or explicit but proper to the modelling tool. It has been argued in \cite{seshia2014modeling} that even for the same modelling formalism, that is to say a mathematical object, different implementations (via an abstract syntax and a semantics) may exist, and so the implementation of the same semantics is not guaranteed. In addition it has been pointed out in \cite{diskin2014category} that it is hard to mathematically classify the numerous modelling languages and techniques, and that a number of MDE (Model-Driven Engineering) approaches are not grounded on formal semantics.

As for the data format, it is intrinsically related to the structure of data of models. Using a particular data format is useful when persisting models. The data format does not influence the semantics of a model.

An interface can be useful for manipulating models, but the question of how to define this interface remains. As with the above-mentioned problems with metamodels, it is necessary to ensure that the internal implementation associating to an interface reflects the modelling formalism. Indeed, when several implementations coexist, it is not certain that they behave in the same way. In the latter case, the only solution might be to use the same implementation -- like software libraries.

One might also think of mappings, translations or transformations between different metamodels, formats and interfaces. When feasible, this would imply the same problems -- managing of mappings and even new metamodels, formats or interfaces acting as bridges -- as with models. In some cases, although the magnitude of mappings or translations can be quadratic or linear (in the number of metamodels, for instance), it may be impossible (different tools specializing in different fields) or inappropriate (e.g. too expensive) to create or define a mapping. In order to address these various issues, this paper introduces and specifies a novel Federated Architecture (FA) for handling descriptive models.

In Section \ref{sec:approachchap4} we present our main idea, as well as a description of FA, and give some reasons (others are presented in Section \ref{sec:relatedresearchchap4}) for adopting the category theory framework. In Section \ref{sec:applicationsct} we then give a formal definition of FA. In Section \ref{sec:relatedresearchchap4} we discuss the relevant related work. In Section \ref{sec:conclusionchap4}, we present some concluding remarks.

\section{Main idea: encapsulation-differentiation}
\label{sec:approachchap4} 
It turns out from Section \ref{sec:problemchap4},  that whatever the common interface, metamodel or data format used between tools,  clarity and precision are not the only essential features. In today's software-intensive and computer-aided environments, the way in which models can be defined and conveyed is also essential since this has a direct impact on interoperability. We consider this a question that is better addressed at the level of the architecture of models. We adopt the ISO/IEC/IEEE 42010\footnote{\url{http://ieeexplore.ieee.org/document/6129467/}} standard, which defines the architecture of any system as the abstract description or the fundamental organization of that system, embodied in its components, the relations of those components to each other and to the environment, and the principles governing the design and evolution of that system.

FA is federated in that while it is based on a specific structure, it makes it possible to define domain specific-models via instantiations and interpretations -- usages -- of that structure. FA therefore unifies models on their structure and the links from the structure to its usages. FA equally differentiates models through  domain-specific interpretations. Below we look at the basic organisation and principles governing the definition of a descriptive model.

\textbf{Structure of a model}
\label{sec:modelorganisationstructure} 
Above we define a descriptive model as one whose structure -- equipped with input and output interaction points -- encodes the main (abstract) structure of the thing to be modelled or designed. This structure consists of boxes linked with incoming and outgoing wires. Input and output interaction points are ports associated to boxes ensuring connection with their surrounding environment.
A representation of a tricky system can be managed using hierarchical decompositions and re-compositions.

\textbf{Interpretations of the structure}
Since many systems are usually studied from different perspectives, namely: human, physical, structure, behaviour, cost, safety, etc., then different interpretations of their structure(s) are possible and necessary. At the level of models, these interpretations yield domain-specific languages or semantics. A structure hence constrains the definition of its interpretation, even if possible interpretations of the structure cannot be known by examining it per se.
It may consequently be difficult, if not impossible, to impose or foresee all possible interpretations of the structure of a model. What is needed is a means of using that structure appropriately, and this starts with a clear and precise definition of that structure.

\textbf{Instances of the structure}
Utilizing a structure involves data relating to an actual (or candidate) system that is modelled. These data should also relate to an interpretation of the structure, meaning that their precise meaning can be known and that they can be utilized.

The main components of FA are therefore (a) the model's structure; (b) the interpretations of the structure; (c) the corresponding instances of the structure. 
The architecture needs to support the ability (or to provide means) to define various interpretations of boxes. It does not model a concrete interpretation of them, but we will present some principles or guidelines for creating an interpretation. The same remark applies to instances of the structure.

\textbf{Trivial example} suppose the system is a research paper like the present one. A Structure of the paper comprises 8 boxes (that is to say, sections, including Abstract, References and Appendix). The paper itself is seen as a composite section that embodies these 8 sections. An Instance of this structure can be given by providing each section's title, size, or content. An interpretation of this structure might be a definition of what the size of (constituent and composite) boxes  means. Note that this example does not consider the internal wiring pattern of composite boxes (sections).

\textbf{Why the category theory framework?}
 Category theory was created to unify and simplify mathematical systems \cite{Saunders1998Categories} and has given rise to some universal constructions based on a relational viewpoint. 
This framework enables us to i) naturally define the notion of composite-box with its constituent-boxes via an arrow in a category; ii) abstract away details relating to a particular interpretation of the structure of models. At the same time, it allows a formal link to be specified between the structure of models and various targets like Instances and Interpretations. This formal link is ideally defined as a functor. We discuss further in Section \ref{sec:r2}, comparing and drawing on related work, why category theory seems to be the framework for appropriately defining FA.
Besides the framework can guide us towards possible canonical extensions of FA  with the enrichment of objects and category structure, see Section \ref{sec:conclusionchap4}.

	\section{Components of FA} 
	\label{sec:applicationsct}
	
	We define the structure of models in Section \ref{sec:modelstructure}, and discuss the interpretation and usage of this structure in Section \ref{sec:specs} and Section \ref{sec:structureusage} respectively. To be able to define the structure of a model as a category, the objects will be the boxes and the arrows will show how boxes are linked to create composite boxes.
	
	\subsection{Structure of models}
	\label{sec:modelstructure}
	\begin{figure}[!ht]
		\begin{minipage}{0.5\textwidth}
		\centering
		\includegraphics[scale=0.25]{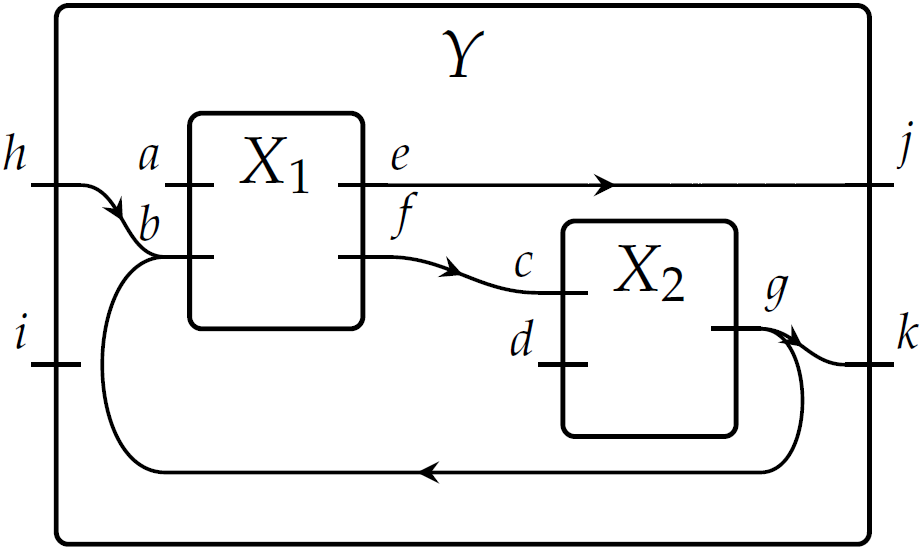}
		\caption{A box Y composed of 2 boxes X$_1$ and X$_2$\label{fig:wdbox}}
		 \end{minipage}\hfill
		 \begin{minipage}{0.46\textwidth}
		\centering
		\includegraphics[scale=0.27]{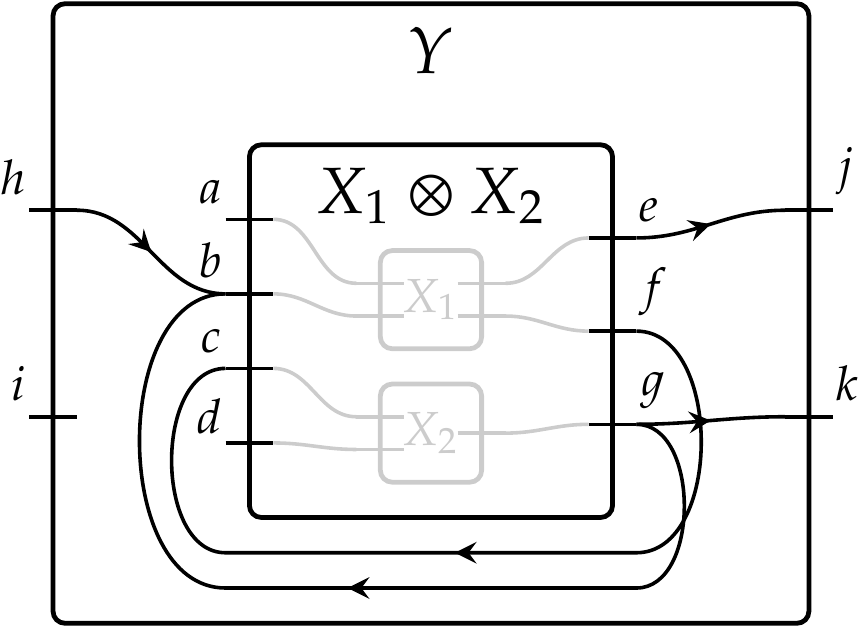}
		\caption{A box $Y$ composed of $X_1 \otimes X_2$ \label{fig:wdboxtensor}}
		\end{minipage}\hfill
	\end{figure}
	
	Defining the structure of models as a symmetric multicategory denoted \msccat\: is done in two stages.
	We start by defining a symmetric monoidal category (SMC) denoted (\msccatm, $\otimes$, I) and prove that it is indeed an SMC. We then derive \msccat\: using results from \cite{leinster2003higher}.
	Since we will be using matrices to define the composition in \msccatm, we may be able to derive a further symmetric multicategory denoted \textbf{MatMsc}, where arrows are given by matrices that might be useful for computations.
	
	Why is it not enough simply to derive an SMC? Consider the box $Y$ in Figure \ref{fig:wdbox}, composed of 2 boxes $X_1$ and $X_2$ . In an SMC $Y$ can be seen to be composed of $X_1 \otimes X_2$ (i.e. there is an arrow $X_1 \otimes X_2 \to Y$), yielding Figure \ref{fig:wdboxtensor}. Figure \ref{fig:wdbox}
	is a lot clearer than Figure \ref{fig:wdboxtensor}, in which the modularity (i.e. the clear distinction between $X_1$ and $X_2$ in $Y$) is not reflected graphically. Figure \ref{fig:wdbox} corresponds to an arrow $X_1, X_2 \to Y$ in the multicategory. However, it is more convenient to use the SMC than the multicategory insofar as subscripts can be avoided.
	
	The box and arrow representation that we use should not be confused with string diagrams
	\cite{selinger2010survey}. String diagrams are a graphical language of categories \cite{selinger2010survey} in which the objects and morphisms of categories are represented by wires and boxes respectively. In our paper, on the other hand, objects correspond to boxes, and morphisms correspond to how composite boxes are internally built (using wiring patterns) with its constituent boxes.

	Figure \ref{fig:wdbox} looks like a Wiring Diagram WD \cite{spivak2013operad} \cite{vagner2015algebras}, but in fact it differs from a WD in the following respects. In figure \ref{fig:wdbox} we allow unconnected ports (ports without incoming or outgoing wires), converging wires (separate wires with the same target port), and diverging wires (same source port). Unconnected ports and converging wires are prohibited in a WD. Figure \ref{fig:wdbox} would also allow several links between a source port and a target port. WDs may therefore be considered as special cases of figure \ref{fig:wdbox} that include certain restrictions; this is discussed in Section \ref{sec:r2}. The point of allowing these various things is to encompass as many wiring patterns as possible that may occur in practice.  For instance, it might be the case, structurally, that an actual component is not plugged into its surrounding environment via any of its interfaces.
	See for instance a simple Ptolemy model \cite[Figure 9]{tripakis2013modular}, similarly permissive.

	\begin{figure}[!h]
		\centering
		\includegraphics[scale=0.35]{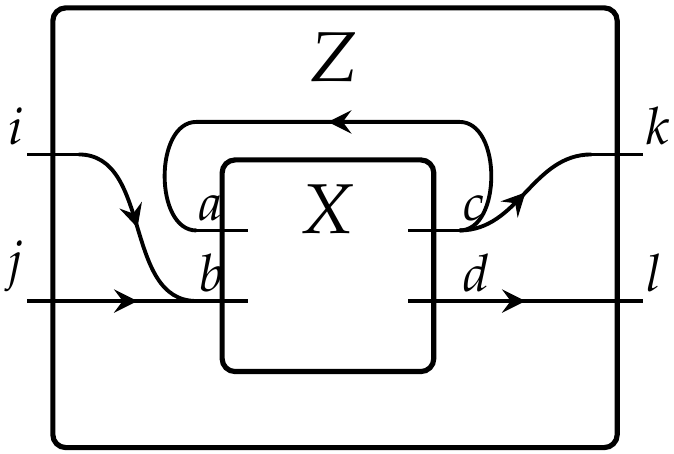}
		\caption{A box $Z$ composed of $X$ \label{fig:wdboxxz}}
	\end{figure}
	
	\begin{definition}
		\label{def:defmscm}
		The category \msccatm\:has as constituents 
	\end{definition}
	
	$\bullet$ Objects $\msccatm_0$. An object $a \in \msccatm_0$ is a box and consists of a tuple $(in(a), out(a))$, where $in(a)$ and $out(a)$ are the finite sets of input (left) and output (right) ports of (the box) $a$ such that  $in(a)\cap  out(a) = \emptyset$. 
	We additionally require that \textbf{R1}: the set of ports (i.e. $in(a) \cup out(a)$) be pairwise disjoint for all objects in $\msccatm_0$.
	
	$\bullet$ Arrows  \msccatm$(b;a)$. An arrow $b \xrightarrow{\theta} a$ 
	indicates how a box $a$ is internally built from the box $b$.
	It consists of a tuple $(\theta^{in},\theta^{out})$ 
	\begin{equation}
	\begin{array}{rrrll}
	\theta^{in}: & L^{in} \rightarrow &  in(b)  &		\times &  (out(b) \cup in(a))   \\
	\theta^{out}: & L^{out} \rightarrow &     out(a)                   &		\times &  out(b)
	\end{array}
	\label{eq:arrowinmsc1}
	\end{equation} where 
	$L^{in}$ and $L^{out}$ are the abstract finite sets of links coming respectively into an input port and an output port of one of the boxes $b$ and $a$.
	
	We require that \textbf{R2}: $L^{in}, L^{out} \in K$ and  $L^{in} \cap L^{out} = \emptyset$. $(K,\cup,\cdot,\emptyset,\{\epsilon\})$ is a semiring introduced below to define composition using matrices and detailed in Appendix (Section \ref{app:conmormat}).
	We additionally require that \textbf{R3}: the domains (or more precisely $L^{in} \cup L^{out}$) of the set of morphisms be pairwise disjoint.
	R1, R2 and R3 ensure that objects and morphisms are all unambiguously defined. This will also be important in the definition of the monoidal product.
	Note also that (\ref{eq:arrowinmsc1}) does not allow a link to go from an output port of $a$ to one of its input ports.
	This is a modelling choice consistent with the fact that any box only exposes its input and output ports. An arrow is intended to give the internal wiring pattern of a composite box.
	A link that goes from an output port of $a$ to one of its input ports should, rather, give rise to a new object $a'$ and a new arrow  $a \to a'$ which models a kind of self-feedback, see Figure \ref{fig:validinvalidbox}.
	\begin{figure}[!h]
		\centering
		\includegraphics[scale=0.45]{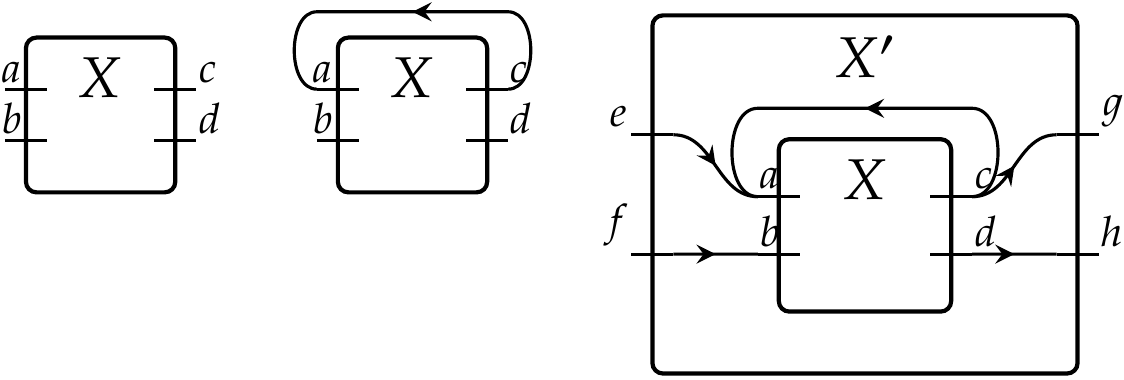}
		\caption{From left to right: a box X, an invalid-box X and an arrow $X \to X'$ \label{fig:validinvalidbox}}
	\end{figure}
	
	In the same way, (\ref{eq:arrowinmsc1}) does not allow a link to go directly from an input port of $a$ to an output port.
	This is also a modelling choice that allows us to consistently zoom in on a composite box, avoiding the presence of isolated links when applying composition. If there were an isolated link, we would simply need to create a basic box with one input port and one output port and link them to the input and output ports respectively of $a$, such that this box outputs onto its output port the data that comes into its input port.
	
	\textbf{Example} Consider Figure \ref{fig:wdboxxz}. We have two boxes X and Z given by $(\{a,b\},\{c,d\})$ and $(\{i,j\},\{k,l\})$ respectively. There is an arrow $X \to Z$ given by
		(\{$l_1 \mapsto (a,c), l_2 \mapsto (b,i), l_3 \mapsto (b,j)$\},\{$l_4 \mapsto (k,c) , l_5 \mapsto (l,d) $\}). 
	
	$\bullet$ Identities \msccatm$(a;a)$. An identity $a \xrightarrow{1_a} a$ consists of  $\theta^{in}$ and $\theta^{out}$
	given by\\
	$(1^{in}_a,1^{out}_a) := (\{lin_i \mapsto (i,i), i \in in(a)\}, \{lout_j \mapsto (j,j), j \in out(a)\})$ ; i.e. links connect identical (input to input and output to output respectively) ports from the domain ($a$) to the codomain ($a$) of $1_a$.
	
	$\bullet$ Composition formula $\circ$. Composition enables a constituent of a composite box to be replaced by a more detailed constituent(s). For boxes $a, b, c \in$ \msccatm$_0$, we have a function $\circ$:
	\begin{center}$\msccatm(b; a) \times \msccatm(c ; b)  \rightarrow \msccatm(c ; a)$.\end{center}
	We must define an arrow $c \xrightarrow{\theta \circ \theta_1} a$. This amounts to finding a function $\theta_0 :=  \theta \circ \theta_1$, i.e. $\theta^{in}_0$ and $\theta^{out}_0$ following (\ref{eq:arrowinmsc1}).
	
	In order to define the composition formula, we use matrix operations. We start by establishing a connection between arrows and certain matrices over a semiring $(K,\cup,\cdot,\emptyset,\{\epsilon\})$; see Appendix \ref{app:conmormat} for more details. We are now ready to work with the  matrix representation of an arrow.
	
	The data (\ref{eq:arrowinmsc1}) of an arrow can be rewritten as follows (see Appendix \ref{app:conmormat}).
	We define
	\begin{equation}
	\begin{array}{rrrl}
	M^{out} :& Y^{in} \times (Y^{out} \cup Z^{in})& \to& 2^{L^{in}}\\
	M^{in} :& Z^{out} \times Y^{out}& \to& 2^{L^{out}}
	\end{array}
	\label{eq:arrowinmsc4} 
	\end{equation}\\
	where $Y^{in} = in(b)$, $Y^{out} = out(b)$,  $Z^{in} = in(a)$ and $Z^{out} = out(a)$. $2^L$ is the powerset.
	
	Let $X^{in} = in(c)$, $X^{out} = out(c)$.
	
	We denote $\emptyset_M$ as the matrix with $\emptyset$'s everywhere and $\mathbb{I}_M$ the square matrix with $\{\epsilon\}$'s on the main diagonal and $\emptyset$'s elsewhere.

	Given (\ref{eq:arrowinmsc4}) in Appendix \ref{app:conmormat} and the above notations, we are now ready to define arrow composition.

	The map $\theta \circ \theta_1$ (i.e., $c \xrightarrow{\theta_0}a$ in \msccatm)  of an arrow is rewritten as $(O^{in},O^{out})$, given by the dashed arrows in the following diagrams. In accordance with Appendix \ref{app:conmormat} an arrow $I \xrightarrow{M} J$ in this diagram defines the matrix $M(I,J)$.\\ 
	$\begin{array}{lr}
	\xymatrixcolsep{4pc}\xymatrix {
		{X^{in} \ar[d]_-{N_{in}}} \ar@{-->}[r]^-{O^{in}}  &{Z^{in} \cup X^{out}}\\
		{Y^{in} \cup X^{out}}  \ar[r]_-{M'^{in}}             &{Z^{in} \cup Y^{out} \cup X^{out}} \ar[u]_-{  N'^{out} } 
	}\quad\quad
	&
	\xymatrixcolsep{3pc}\xymatrix {
		{Z^{out}} \ar@{-->}[r]^-{O^{out}} \ar[d]^-{M^{out}} &{X^{out}}\\
		{Y^{out}} \ar[ru]_-{N^{out}}
	}
	\label{eq:compostionformula2}
	\end{array}$\\
	\begin{equation}
	\label{matrixarrowcomposition}
	\begin{array}{rl}
	O^{out} =& M^{out} \times N^{out}\\
	O^{in} = & N^{in} \times M'^{in} \times N'^{out}
	\end{array}
	\end{equation} where
	$N^{in}$, $N'^{out}$, $M'^{in}$ are respectively given by:
	\[
	\begin{blockarray}{rl}
	& Y^{in}\cup X^{out}  \\
	\begin{block}{r(c)}
	X^{in} & N^{in} \\
	\end{block}
	\end{blockarray}
	\quad
	\begin{blockarray}{rcc}
	& Z^{in} & X^{out} \\
	\begin{block}{c(cc)}
	Z^{in}  &  \mathbb{I}_M &     \emptyset_M \\
	Y^{out} & \emptyset_M   &  N^{out}  \\
	X^{out} & \emptyset_M   &     \mathbb{I}_M \\
	\end{block}
	\end{blockarray}
	\quad
	\begin{blockarray}{rcc}
	& Z^{in}\cup Y^{out}  & X^{out} \\
	\begin{block}{c(cc)}
	Y^{in}  & M^{in}        &     \emptyset_M \\
	X^{out} & \emptyset_M   &  \mathbb{I}_M  \\
	\end{block}
	\end{blockarray}
	\]
	
	and $M^{out}$, $N^{out}$ are respectively given by:\\
	\[
	\begin{blockarray}{rl}
	&Y^{out}  \\
	\begin{block}{r(c)}
	Z^{out} & M^{out} \\
	\end{block}
	\end{blockarray}
	\quad
	\begin{blockarray}{rl}
	&X^{out}  \\
	\begin{block}{r(c)}
	Y^{out} & N^{out} \\
	\end{block}
	\end{blockarray}
	\]
	\begin{figure}[!h]
		\centering
		\vspace{-0.2cm}
		\includegraphics[scale=0.25]{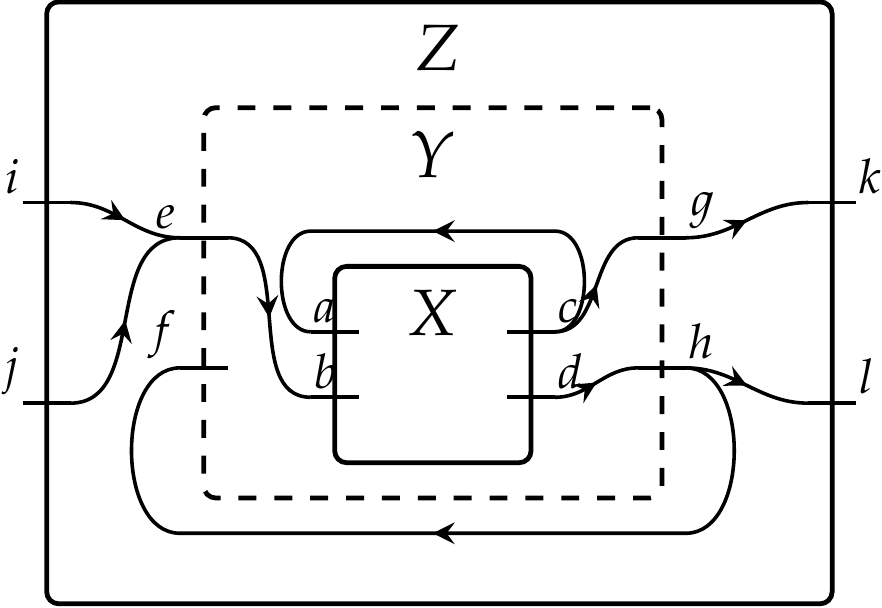}
		\caption{A box $Z$ composed of $Y$, itself composed of $X$ \label{fig:wdboxxyz}}
	\end{figure}
	
	\textbf{Example} Figure \ref{fig:wdboxxyz} is the representation of two arrows $X \xrightarrow{\theta_1} Y$ and $Y \xrightarrow{\theta} Z$. The arrow $X \xrightarrow{\theta  \circ \theta_1}  Z$  is given by the matrices $O^{in}$ and $O^{out}$ computed as follows.
		The matrices $M^{out}$, $N^{out}$, $N^{in}$, $N'^{out}$ and $M'^{in}$  are the following.
		\[
		M^{out}=\begin{blockarray}{rcc}
		& g & h \\
		\begin{block}{c(cc)}
		k & \{l_1\} & \emptyset  \\
		l & \emptyset & \{l_2\} \\
		\end{block}
		\end{blockarray}
		\quad
		N^{out}=\begin{blockarray}{rcc}
		& c & d \\
		\begin{block}{c(cc)}
		g & \{l_3\} & \emptyset  \\
		h & \emptyset & \{l_4\} \\
		\end{block}
		\end{blockarray}
		\quad
		N^{in}=\begin{blockarray}{rcccc}
		& e & f & c & d\\
		\begin{block}{c(cccc)}
		a & \emptyset & \emptyset & \{l_5\} & \emptyset  \\
		b & \{l_6\} & \emptyset & \emptyset & \emptyset  \\
		\end{block}
		\end{blockarray}
		\]
		%
		\[	
		N'^{out}=\begin{blockarray}{ccccc}
		& i & j & c & d\\
		\begin{block}{c(cccc)}
		i & \{\epsilon\} & \emptyset & \emptyset & \emptyset \\
		j & \emptyset & \{\epsilon\} & \emptyset & \emptyset \\
		g & \emptyset & \emptyset & \{l_3\} & \emptyset \\
		h & \emptyset & \emptyset & \emptyset & \{l_4\} \\
		c & \emptyset & \emptyset & \{\epsilon\} & \emptyset \\
		d & \emptyset & \emptyset & \emptyset & \{\epsilon\} \\
		\end{block}
		\end{blockarray}
		\quad
		M'^{in}=\begin{blockarray}{rcccccc}
		& i & j & g & h & c & d \\
		\begin{block}{r(cccccc)}
		e & \{l_7\} & \{l_8\} & \emptyset & \emptyset & \emptyset & \emptyset  \\
		f & \emptyset & \emptyset & \emptyset & \{l_9\} & \emptyset & \emptyset \\
		c & \emptyset & \emptyset & \emptyset & \emptyset & \{\epsilon\} & \emptyset  \\
		d & \emptyset & \emptyset & \emptyset & \emptyset & \emptyset & \{\epsilon\} \\
		\end{block}
		\end{blockarray}
		\]

		It is easy to check that  $O^{out} = M^{out} \times N^{out}$ and $O^{in} =N^{in} \times M'^{in \times} N'^{out}$  are  given by the following matrices:
		\[	
		O^{out} =\begin{blockarray}{ccc}
		&  c & d \\
		\begin{block}{c(cc)}
		k & \{l_1l_3\} & \emptyset   \\
		l & \emptyset & \{l_2l_4\}  \\
		\end{block}
		\end{blockarray}
		\quad
		O^{in}=\begin{blockarray}{ccccc}
		& i & j & c & d \\
		\begin{block}{c(cccc)}
		a & \emptyset  & \emptyset & \{l_5\} & \emptyset  \\
		b & \{l_6l_7 \} & \{l_6l_8\} & \emptyset & \emptyset \\
		\end{block}
		\end{blockarray}
		\]
	
	The composition in this example enables us to zoom in on $Z$, in other words to know how $Z$ is built from $X$, itself a constituent box of $Y$. After the composition the link in Figure \ref{fig:wdboxxyz} (a representation of the two arrows $X \xrightarrow{\theta_1} Y$ and $Y \xrightarrow{\theta} Z$) from the port $h$ of Y to its port $f$,  vanishes in Figure \ref{fig:wdboxxz} (the arrow $X \xrightarrow{\theta  \circ \theta_1}  Z$) resulting from the composition. There are two reasons for this: the port $f$ is not connected in $X \xrightarrow{\theta_1} Y$, and the way $\cdot$ acts on links (or elements of $K$) in the semiring $(K,\cup,\cdot,\emptyset,\{\epsilon\})$.
	In practice this could mean that $f$ is not connected, even though it appears to be.
	
	\begin{proposition}
		\label{prop:identity}
		The identity law holds for the constituents of \msccatm.
	\end{proposition}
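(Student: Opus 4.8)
I would prove the proposition by verifying the two laws $1_a\circ\theta=\theta$ and $\theta\circ 1_b=\theta$ for an arbitrary arrow $b\xrightarrow{\theta}a$ of \msccatm, computing both composites through the matrix presentation of arrows (Appendix \ref{app:conmormat}) together with the composition formula (\ref{matrixarrowcomposition}). The first step is to record the matrix shape of an identity: its ``output'' matrix is a generalized identity --- the diagonal matrix over $out(a)$ carrying the link $lout_j$ of each port $j$ on its diagonal and $\emptyset$ off it, i.e.\ $\mathbb{I}_M$ up to the relabelling of the abstract link set that R3 already builds in --- and its ``input'' matrix is the block row carrying such a generalized $\mathbb{I}_M$ over the $in(a)$-columns and $\emptyset_M$ over the $out(a)$-columns. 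Concretely this just says that a link of $1_a$ reaches input port $i$ only from the matching input port $i$ and output port $j$ only from the matching output port $j$; and the two facts that make this useful are purely semiring-level, namely that $\{\epsilon\}$ is the multiplicative unit and $\emptyset$ the absorbing additive unit of $(K,\cup,\cdot,\emptyset,\{\epsilon\})$, so that $\mathbb{I}_M$ is a two-sided unit and $\emptyset_M$ a zero for matrix multiplication over $K$ (this being the part borrowed from Appendix \ref{app:conmormat}).

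For $\theta\circ 1_b$ I would put $\theta$ in the outer slot of (\ref{matrixarrowcomposition}), so that it supplies $M^{in},M^{out}$ with $Y=b$ and $Z=a$, and $1_b$ in the inner slot, so that $X=c=b$ and it supplies $N^{in},N^{out}$. Since $c=b$ the index sets coincide ($X^{in}=Y^{in}=in(b)$ and $X^{out}=Y^{out}=out(b)$), $N^{out}$ is a generalized $\mathbb{I}_M$ over $out(b)$, and $N^{in}$ is the block row $[\,\mathbb{I}_M\mid\emptyset_M\,]$ over $in(b)\cup out(b)$. Substituting into the block matrices, $N'^{out}$ degenerates to a generalized identity over $Z^{in}\cup X^{out}$ (its middle $N^{out}$-block is now $\mathbb{I}_M$, squeezed between two $\mathbb{I}_M$ blocks once $Y^{out}$ and $X^{out}$ are identified), while $N^{in}\times M'^{in}$ retains only the $M^{in}$-block over the $Z^{in}\cup Y^{out}$ columns. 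Using the unit/zero facts above, $O^{out}=M^{out}\times N^{out}=M^{out}$ and $O^{in}=N^{in}\times M'^{in}\times N'^{out}=M^{in}$; that is, $\theta\circ 1_b=\theta$.

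The law $1_a\circ\theta=\theta$ is dual but the collapse happens on the other side, so I would run it separately. Here $1_a$ is the outer arrow ($Y=Z=a$, $M^{out}$ a generalized $\mathbb{I}_M$ over $out(a)$, $M^{in}$ the block row $[\,\emptyset_M\mid\mathbb{I}_M\,]$ over $out(a)\cup in(a)$) and $\theta$ is the inner arrow ($X=b$, supplying $N^{in},N^{out}$). Now $M^{out}=\mathbb{I}_M$ gives $O^{out}=M^{out}\times N^{out}=N^{out}$ immediately, and inside $M'^{in}$ the $M^{in}$-block $[\,\emptyset_M\mid\mathbb{I}_M\,]$ annihilates the $Y^{out}$ columns of $N^{in}$ while letting the $Z^{in}$ and $X^{out}$ coordinates through unchanged; composing with $N'^{out}$, whose only nontrivial block is the $\mathbb{I}_M$ over $Z^{in}$ because its $N^{out}$-block sits in the now-annihilated $Y^{out}$ row, leaves $O^{in}=N^{in}$. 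Hence $1_a\circ\theta=\theta$.

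The one genuinely delicate step --- and the place where a careless version of this argument goes wrong --- is the index-set bookkeeping once the two composed arrows share a box: $Z^{in}\cup Y^{out}\cup X^{out}$ and $Y^{in}\cup X^{out}$ must be treated as formal disjoint unions in the general formula, and one has to check that feeding in an identity really makes each $\emptyset_M$-block absorb everything it should, so that the threefold product defining $O^{in}$ truly degenerates rather than merely simplifying. A second point worth stating explicitly is the meaning of ``$=$'' here: the composite re-uses the links of $\theta$ only after multiplying them by the unit links $lin_i,lout_j$, so the equalities hold on the nose once links are read up to the relabelling already assumed by R3 (equivalently, $\{\epsilon\}$-multiplication leaves a link unchanged), and no quotient beyond what the definition already carries is introduced. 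Should one wish to avoid matrices, both identities also fall straight out of (\ref{eq:arrowinmsc1}) --- composing with $1_b$ merely reindexes each of $\theta$'s link-assignments along the identity bijection $i\mapsto i$ on ports --- but I would keep the matrix computation so as to stay aligned with the definition of $\circ$ actually in use.
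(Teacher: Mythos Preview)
Your proposal is correct and takes exactly the paper's approach --- verify via the matrix presentation that composing $\theta$ with an identity returns the original pair $(M^{in},M^{out})$ --- only in far more detail than the paper's one-line proof here (the explicit block computation you carry out is essentially what the paper performs later, after Definition~\ref{def:matmsc}, in the \textbf{MatMsc} setting). Your caveat on link relabelling is also well placed: the \msccatm-identity carries actual link names $lin_i,lout_j$ rather than $\{\epsilon\}$ on its diagonal, so literal equality of link sets needs the identification you flag, a point the paper leaves implicit.
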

	\begin{proof}
		The matrices corresponding to $\theta \circ  1_b$ and  $1_a \circ \theta$ are seen to be equal to the matrix corresponding to $\theta$, whenever  $\theta:b \to a$ is an arrow.
	\end{proof}
	
	\begin{proposition}
		\label{prop:associativity}
		The associativity law holds for the constituents  of \msccatm.
	\end{proposition}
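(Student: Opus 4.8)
The plan is to reduce associativity of arrow composition in \msccatm\ to associativity of matrix multiplication over the semiring $(K,\cup,\cdot,\emptyset,\{\epsilon\})$, which in turn follows from the semiring axioms of $K$ together with the fact that the padding blocks $\mathbb{I}_M$ and $\emptyset_M$ behave as multiplicative identity and absorbing element. Fix boxes $a,b,c,d$ and three composable arrows $\theta:b\to a$, $\theta_1:c\to b$ and $\theta_2:d\to c$. Since an arrow is completely determined by its pair of matrices over $K$ (Appendix \ref{app:conmormat}), it is enough to check that $(\theta\circ\theta_1)\circ\theta_2$ and $\theta\circ(\theta_1\circ\theta_2)$ produce the same matrix $O^{out}$ and the same matrix $O^{in}$. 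Conditions R1, R2 and R3 guarantee that all the port sets and link sets occurring below are pairwise disjoint, so every block matrix written is unambiguously indexed and the two iterated dilations cause no index collisions.

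The $O^{out}$ component is immediate. By (\ref{matrixarrowcomposition}), the $O^{out}$ of a binary composite is simply the product $M^{out}\times N^{out}$ of the two output matrices, with no padding involved; hence the $O^{out}$ of the triple composite equals $M^{out}_{\theta}\times M^{out}_{\theta_1}\times M^{out}_{\theta_2}$ under either grouping, and associativity of matrix multiplication over $K$ settles this case.

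The $O^{in}$ component carries the actual work. Here $O^{in}=N^{in}\times M'^{in}\times N'^{out}$, where the three factors are dilations: block matrices assembled from the relevant input and output matrices of the inner arrow together with $\mathbb{I}_M$ and $\emptyset_M$ blocks indexed by the remaining port sets, in particular by the auxiliary copies of the domain output ports (the $X^{out}$ blocks in the composition diagrams). First I would isolate a dilation lemma: padding a product of matrices by a common identity block equals the product of the individually padded matrices, and performing two dilations in succession equals a single dilation by the union of their index sets, up to the evident reindexing. Using this lemma I would substitute the formula for the $O^{in}$ of $\theta\circ\theta_1$ into the outer formula for $(\theta\circ\theta_1)\circ\theta_2$ and push every dilation through the products, and symmetrically expand the $O^{in}$ of $\theta\circ(\theta_1\circ\theta_2)$. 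Both should collapse to one and the same product of the fully dilated matrices attached to $\theta$, $\theta_1$ and $\theta_2$ in a fixed left-to-right order, at which point associativity of matrix multiplication over $K$ concludes the proof. Conceptually this amounts to realising \msccatm\ inside a category of $K$-matrices in which composition is matrix multiplication, so that associativity is inherited; the computation just above is the explicit form of that embedding.

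The step I expect to be the main obstacle is organizational rather than conceptual: keeping the block decompositions aligned across the two parenthesizations. The intermediate box ($b$ in one grouping, $c$ in the other) contributes its input and output port sets at different positions in the block structure, and the feedback copies of the output ports must be threaded consistently through both layers of dilation; verifying that the two resulting block matrices coincide entry by entry, and not merely up to a reshuffling of rows and columns, is what must be carried out with care, and is precisely where R1, R2 and R3 are used to exclude ambiguity.
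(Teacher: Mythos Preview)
Your approach is the same as the paper's: reduce associativity in \msccatm\ to associativity of matrix multiplication over the semiring $K$. The paper's own proof is a single sentence (``The associativity law follows from the associativity of matrix multiplication''), so on the level of strategy you are in full agreement.

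Where you go further is in recognising that for the $O^{in}$ component the reduction is not literally a bare triple product: the factors $M'^{in}$ and $N'^{out}$ in (\ref{matrixarrowcomposition}) are block matrices built from the constituent arrows' data together with $\mathbb{I}_M$ and $\emptyset_M$ padding, and the two bracketings of a triple composite assemble these blocks in different orders and with different auxiliary index sets. Your ``dilation lemma'' and the bookkeeping remarks about R1--R3 are exactly what is needed to justify the paper's one-line claim; the paper simply leaves this verification implicit. So your proposal is correct and is a more honest version of the same argument rather than a different route.
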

	\begin{proof}
		The associativity law follows from the associativity of matrix multiplication.
	\end{proof}
	
	\noindent Now we define the monoidal product of \msccatm.
	
	\begin{definition}
		\label{def:deftenpro}
		Let $X_1, X_2, Y_1, Y_2$ and $\theta_1 : X_1 \to Y_1$, $\theta_2 : X_2 \to Y_2$ be objects and arrows respectively of \msccatm. The monoidal product (or tensor product) $\otimes:\msccatm \times \msccatm \to \msccatm$ is given by: 
		\begin{equation}
		X_1 \otimes X_2 := (in(X_1)\cup in(X_2), out(X_1)\cup out(X_2)), \quad   \theta_1 \otimes \theta_2:=  \theta_1 \cup \theta_2
		\label{eq:tensorproduct} 
		\end{equation}
		
		The unit I is the box without input and output ports i.e. I:=$(\emptyset,\emptyset)$. R2 and R3 ensure that $\theta_1 \otimes \theta_2$ is well-defined.
	\end{definition}
	
	Note here that the effect of the tensor on different objects and arrows is simply to stack them straightforwardly and without losing (thanks to R1, R2 and R3) the origins of ports and links. For instance, stacking the unit I with any non-unit object $X$ will have no effect on $X$, since the box corresponding to I does not have any input or output ports. 
	The effect of tensor on the same object (resp. arrow) is: stacking a box (resp. an arrow) with itself has no effect, i.e., results in the same object (resp. arrow).
	
	\begin{proposition} $\otimes:\msccatm \times \msccatm \to \msccatm $ is a (bi)functor.
	\end{proposition}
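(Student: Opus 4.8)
The plan is to check that $\otimes$ satisfies the three defining properties of a functor on the product category $\msccatm \times \msccatm$: (i) that it sends objects and arrows of the product to objects and arrows of $\msccatm$ (well-definedness); (ii) that it preserves identities; and (iii) that it preserves composition, where composition in $\msccatm \times \msccatm$ is taken componentwise. I expect (i) and (ii) to be short -- essentially an unpacking of Definition \ref{def:deftenpro} together with R1, R2, R3 -- and (iii) to be where the work lies.

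For (i), on objects I would verify $in(X_1 \otimes X_2) \cap out(X_1 \otimes X_2) = \emptyset$: this reduces to the object axiom $in(X_i) \cap out(X_i) = \emptyset$ together with R1, which forces $in(X_i) \cap out(X_j) = \emptyset$ for $i \neq j$; and the port set of $X_1 \otimes X_2$ is again a union of pairwise disjoint port sets, so R1 persists. On arrows I would unpack $\theta_1 \otimes \theta_2 = \theta_1 \cup \theta_2$: its link sets are $L_1^{in} \cup L_2^{in}$ and $L_1^{out} \cup L_2^{out}$, which lie in $K$ and are disjoint by R2 and R3 (the degenerate case $\theta_1 = \theta_2$ being absorbed by idempotency of $\cup$), and they land in the right codomains because $in(X_i)$, $out(X_i)$ and $in(Y_i)$ all sit inside the corresponding unions; so $\theta_1 \otimes \theta_2$ has the shape (\ref{eq:arrowinmsc1}) of an arrow $X_1 \otimes X_2 \to Y_1 \otimes Y_2$, and R3 persists for the enlarged collection. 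For (ii), $1_{X_1} \cup 1_{X_2}$ sends $lin_i \mapsto (i,i)$ over $in(X_1) \cup in(X_2)$ and $lout_j \mapsto (j,j)$ over $out(X_1) \cup out(X_2)$, with no label clash by R1, which is exactly $1_{X_1 \otimes X_2}$.

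For (iii), given $\theta_1 : X_1 \to Y_1$, $\theta : Y_1 \to Z_1$, $\psi_1 : X_2 \to Y_2$, $\psi : Y_2 \to Z_2$, I want $(\theta \otimes \psi) \circ (\theta_1 \otimes \psi_1) = (\theta \circ \theta_1) \otimes (\psi \circ \psi_1)$. My approach is to pass to the matrix description of Appendix \ref{app:conmormat} and exploit disjointness of ports. Because by R1 all port sets in sight are pairwise disjoint, the matrix attached to a disjoint union of two arrows is, after regrouping its rows and columns so that the $X_1, Y_1, Z_1$-indices precede the $X_2, Y_2, Z_2$-indices, block-diagonal with the two individual matrices as its blocks -- the cross blocks vanish because no link of one arrow connects a port of the other. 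Consequently every matrix entering the composition formula (\ref{matrixarrowcomposition}) for the tensored pair, namely $M^{out}, N^{out}, N^{in}, M'^{in}, N'^{out}$ together with the padding blocks $\mathbb{I}_M$ and $\emptyset_M$ (whose index sets $Z^{in} \cup X^{out}$, $Y^{in} \cup X^{out}$, etc. all split as the corresponding unions), is the direct sum of the corresponding matrices built from $(\theta_1, \theta)$ and from $(\psi_1, \psi)$. Since matrix multiplication over $K$ commutes with direct sums, the products $O^{in}$ and $O^{out}$ of (\ref{matrixarrowcomposition}) come out as direct sums too, and reading this back through the appendix gives $(\theta \circ \theta_1) \cup (\psi \circ \psi_1) = (\theta \circ \theta_1) \otimes (\psi \circ \psi_1)$.

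The hard part will be the bookkeeping inside step (iii): making sure the regrouping of indices -- which is just an instance of the symmetry of $\otimes$ -- is applied uniformly to all of $M^{out}, N^{out}, N^{in}, M'^{in}, N'^{out}$ and to every identity/zero padding block in (\ref{matrixarrowcomposition}), so that the direct-sum decompositions are simultaneously available and the block-diagonal multiplication genuinely goes through. I expect no real difficulty beyond careful indexing, since an arrow of $\msccatm$ is determined by its underlying link data and not by any chosen matrix layout, so this reindexing is harmless at the level of arrows.
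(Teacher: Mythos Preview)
Your proposal is correct and structurally matches the paper's proof: both check preservation of identities and of composition, and your identity argument is exactly the paper's $1_{X_1}\cup 1_{X_2}=1_{X_1\otimes X_2}$. The difference lies in step (iii). The paper simply asserts the key equality
\[
(\alpha_1\cup\alpha_2)\circ(\theta_1\cup\theta_2)=(\alpha_1\circ\theta_1)\cup(\alpha_2\circ\theta_2)
\]
without further justification, whereas you actually establish it by passing to the matrix description of Appendix~\ref{app:conmormat}, observing that R1 forces the relevant matrices to be block-diagonal after reindexing, and then using that matrix multiplication over $K$ respects direct sums. This buys you a genuine verification of the step the paper leaves implicit; conversely the paper's route is shorter precisely because it treats that equality as evident from the set-theoretic definition of $\circ$ via~(\ref{matrixarrowcomposition}) together with the disjointness hypotheses. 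Your additional well-definedness check in (i) is likewise more careful than the paper, which delegates it to the remark following Definition~\ref{def:deftenpro}.
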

	\begin{proof}
		$\otimes$ preserves identity morphisms and composition of morphisms. It suffices to write the corresponding morphisms, i.e. let $X_1, X_2, Y_1, Y_2, Z_1, Z_2$ and $\alpha_1 : Y_1 \to Z_1$, $\alpha_2 : Y_2 \to Z_2$, $\theta_1 : X_1 \to Y_1$, $\theta_2 : X_2 \to Y_2$ be objects and arrows respectively of \msccatm\
		\begin{itemize}
			\item[] Identity $\otimes(1_{X_1},1_{X_2})$ =  $1_{X_1} \cup 1_{X_2}$ = $X_1\cup X_2 \xrightarrow{1_{X_1\cup X_2 }}  X_1\cup X_2 = 1_{\otimes(X_1,X_2)}$
			\item[] Composition
			$\otimes(\alpha_1 \circ \theta_1,\alpha_2 \circ \theta_2)=  $ $\otimes( X_1 \xrightarrow{\alpha_1 \circ \theta_1}  Z_1, X_2 \xrightarrow{\alpha_2 \circ \theta_2} Z_2) =\\ 
			X_1 \cup X_2   \xrightarrow{(\alpha_1 \circ \theta_1)  \cup (\alpha_2 \circ \theta_2)}         Z_1 \cup Z_2$ and\\
			$\otimes(\alpha_1,\alpha_2) \circ \otimes(\theta_1,\theta_2) = $   
			$(Y_1\cup Y_2 \xrightarrow{\alpha_1 \cup \alpha_2}  Z_1\cup Z_2 )   \circ  (X_1\cup X_2 \xrightarrow{\theta_1 \cup \theta_2}   Y_1\cup Y_2)=\\
			X_1\cup X_2 \xrightarrow{(\alpha_1 \circ \theta_1) \cup (\alpha_2 \circ \theta_2)}  Z_1\cup Z_2$
		\end{itemize}
	\end{proof}

	\begin{proposition}
		The category (\msccatm,$\otimes$,I) with the constituents given in Definition \ref{def:defmscm} and the tensor product given in Definition \ref{def:deftenpro} is a symmetric monoidal category.
	\end{proposition}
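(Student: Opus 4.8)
The plan is to observe that, although the definition of a symmetric monoidal category (SMC) nominally requires an associator, left and right unitors, a braiding, and the pentagon, triangle, hexagon and symmetry coherence axioms, the monoidal structure of Definition~\ref{def:deftenpro} is in fact \emph{strict} and its braiding can be taken to be the identity, so that every coherence diagram becomes a diagram of identity arrows. Since $\msccatm$ has already been shown to be a category (Propositions~\ref{prop:identity} and~\ref{prop:associativity}) and $\otimes$ a bifunctor (the preceding proposition), what remains is to supply the structural natural transformations and verify the axioms, and I claim they may all be taken to be identities.

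First I would record strictness on objects. Since $\otimes$ acts on objects by $X_1 \otimes X_2 = (in(X_1) \cup in(X_2),\ out(X_1) \cup out(X_2))$ and set union is strictly associative with $\emptyset$ as a two-sided unit, we obtain $(X_1 \otimes X_2) \otimes X_3 = X_1 \otimes (X_2 \otimes X_3)$ and $I \otimes X = X = X \otimes I$ on the nose, where $I = (\emptyset,\emptyset)$. Strictness on arrows is similar: $\theta_1 \otimes \theta_2 = \theta_1 \cup \theta_2$ is the union of the link-assignment data of (\ref{eq:arrowinmsc1}), which R2 and R3 guarantee is again well-formed (its link sets $L^{in}, L^{out}$ stay in the semiring $K$ and remain disjoint, and the codomain schema is respected since $in(X_i) \times (\cdots) \subseteq in(X_1 \otimes X_2) \times (\cdots)$); union of functions viewed as graphs is strictly associative, and $1_I$ is the empty assignment, so $(\theta_1 \otimes \theta_2) \otimes \theta_3 = \theta_1 \otimes (\theta_2 \otimes \theta_3)$ and $1_I \otimes \theta = \theta = \theta \otimes 1_I$. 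Hence I would take the associator $\alpha$ and the unitors $\lambda,\rho$ to be identity natural transformations; they are trivially natural isomorphisms, and the pentagon and triangle diagrams, being built from identities, commute.

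Next I would treat the symmetry. Set union and function-graph union are also commutative, so $X_1 \otimes X_2 = X_2 \otimes X_1$ and $\theta_1 \otimes \theta_2 = \theta_2 \otimes \theta_1$ literally; I would therefore define the braiding by $\sigma_{X_1,X_2} := 1_{X_1 \otimes X_2} \colon X_1 \otimes X_2 \to X_2 \otimes X_1$. Naturality is immediate, since for $\theta_i \colon X_i \to Y_i$ both $\sigma_{Y_1,Y_2} \circ (\theta_1 \otimes \theta_2)$ and $(\theta_2 \otimes \theta_1) \circ \sigma_{X_1,X_2}$ equal $\theta_1 \cup \theta_2$; the symmetry axiom $\sigma_{X_2,X_1} \circ \sigma_{X_1,X_2} = 1_{X_1 \otimes X_2}$ holds because both sides are identities; and the hexagon identities likewise reduce to equalities between identity arrows. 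Putting the three steps together yields that $(\msccatm,\otimes,I)$ is a strict SMC, which in particular proves the statement.

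The only genuine content is well-definedness: that $X_1 \otimes X_2$ is again an object and $\theta_1 \otimes \theta_2$ again an arrow in the sense of Definition~\ref{def:defmscm}, i.e. that R1, R2, R3 are preserved by $\otimes$ — which is exactly the remark already made after Definition~\ref{def:deftenpro} and can be cited directly, the pairwise-disjointness conditions being precisely what prevents the unions from identifying distinct ports or links and what keeps $\theta_1 \cup \theta_2$ a function. I expect this bookkeeping, rather than any coherence computation, to be the main (and essentially the only) obstacle; once it is secured, the SMC axioms are vacuous because all the structure maps are identities. If one instead preferred a non-strict presentation using tagged disjoint unions of ports, the work would move to constructing the evident renaming isomorphisms and checking coherence by hand, but the on-the-nose description above makes that unnecessary.
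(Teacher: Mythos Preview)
Your proposal is correct and follows essentially the same approach as the paper: both observe that associativity, unit and commutativity hold on the nose (by strict associativity, unitality and commutativity of set/function union), so the associator, unitors and braiding may all be taken to be identities and every coherence condition is vacuous. Your write-up is more explicit than the paper's --- you spell out strictness on arrows, naturality of the braiding, and the role of R1--R3 in well-definedness --- but the underlying argument is the same.
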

	\begin{proof}
		We have already shown that \msccatm\:is a category (see Definition \ref{def:defmscm}, Proposition \ref{prop:associativity} and Proposition \ref{prop:associativity}). It remains to show that the tensor product is effectively a symmetric monoidal structure on \msccatm.
		
		Given $X_1,X_2,X_3 \in \msccatm_0$, we have
		\begin{equation}
		\begin{array}{rl}
		(X_1 \otimes X_2) \otimes X_3 =  X_1 \otimes (X_2 \otimes X_3) & (associativity)\\
		\text{I} \otimes X_1 = X_1 = X_1 \otimes \text{I} & (left\:\text{and} \: right \: units)\\
		X_1 \otimes X_2 = X_2 \otimes X_1 & (commutativity)
		\end{array}
		\label{eq:tensorproductproof} 
		\end{equation}
		The associator, unitors and inverse map (braiding) are identities.
	\end{proof}
	
	\begin{proposition}
		\msccat\:is an underlying symmetric multicategory of (\msccatm,$\otimes$).
		Its objects are the same objects as $\msccatm$, and its arrows are the set $\msccat(a_1,....,a_n;a)$. An arrow $a_1,....,a_n \to a$ is defined as an arrow $a_1 \otimes \dots \otimes a_n \to a$ in \msccatm.
	\end{proposition}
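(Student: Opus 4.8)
The plan is to obtain \msccat\ as an instance of the general construction, due to Leinster \cite{leinster2003higher}, which associates to every symmetric monoidal category its underlying symmetric multicategory. Since we have already verified that $(\msccatm,\otimes,\text{I})$ is a symmetric monoidal category, it suffices to recall that construction and to check that it specialises to the data described in the statement. Concretely, the underlying symmetric multicategory of an SMC $(\mathcal{C},\otimes,\text{I})$ has the same objects as $\mathcal{C}$; its set of multi-arrows $a_1,\dots,a_n \to a$ is the hom-set $\mathcal{C}(a_1\otimes\dots\otimes a_n,a)$; identities are inherited from the identity arrows of $\mathcal{C}$; composition is obtained by tensoring the arrows to be substituted and then composing in $\mathcal{C}$ (inserting associators and unitors where needed); and the $S_n$-action on $\mathcal{C}(a_1\otimes\dots\otimes a_n,a)$ is induced by the braiding. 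Leinster shows that the multicategory axioms — associativity and unitality of composition, together with equivariance under the $S_n$-actions — follow from the coherence theorem for symmetric monoidal categories. Taking $\mathcal{C}=\msccatm$ then yields \msccat.

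The steps I would carry out, in order, are: (i) state the general theorem of \cite{leinster2003higher} in the precise form just described; (ii) discharge its hypothesis by pointing back to the earlier propositions establishing that \msccatm\ is a symmetric monoidal category; (iii) unwind the construction on our data, so that objects of \msccat\ are boxes as in Definition \ref{def:defmscm}, a multi-arrow $a_1,\dots,a_n\to a$ is a pair $(\theta^{in},\theta^{out})$ of the shape \eqref{eq:arrowinmsc1} with $b=a_1\otimes\dots\otimes a_n$ (that is, with $in(b)=\bigcup_i in(a_i)$ and $out(b)=\bigcup_i out(a_i)$, per \eqref{eq:tensorproduct}), and composition is computed on the nose by the matrix formula \eqref{matrixarrowcomposition}, because in \msccatm\ the tensor of objects is a disjoint union and the tensor of arrows is $\theta_1\cup\theta_2$; (iv) record that, since the associator, the unitors and the braiding of \msccatm\ are all identities, all the coherence data appearing in the general construction collapses, so that the multicategory axioms reduce to Proposition \ref{prop:identity}, Proposition \ref{prop:associativity}, and the strict commutativity of $\cup$.

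The one point that genuinely needs attention — and is essentially the only obstacle — is the well-definedness and strictness of the $S_n$-action. For $\sigma\in S_n$ one must exhibit a bijection $\msccat(a_1,\dots,a_n;a)\to\msccat(a_{\sigma(1)},\dots,a_{\sigma(n)};a)$; in the general construction this is precomposition with the canonical braiding $a_{\sigma(1)}\otimes\dots\otimes a_{\sigma(n)}\xrightarrow{\sim}a_1\otimes\dots\otimes a_n$. Here I would observe that, because $\otimes$ on objects is set-theoretic union and union is strictly commutative and associative, one has the literal equality of boxes $a_{\sigma(1)}\otimes\dots\otimes a_{\sigma(n)}=a_1\otimes\dots\otimes a_n$ — it is exactly to make such reindexings unambiguous that R1 was imposed — so the braiding is the identity and the $S_n$-action is the obvious (strict) one; equivariance is then immediate. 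The remaining work is pure bookkeeping: checking that forming $a_1\otimes\dots\otimes a_n$ keeps the ports pairwise disjoint (R1) and that substituting a multi-arrow preserves R2 and R3, so the composite is again a legitimate arrow of \msccatm. This is the same verification already used to see that $\otimes$ is well defined on arrows, so no new difficulty arises.
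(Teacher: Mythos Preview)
Your proposal is correct and follows essentially the same approach as the paper: both invoke Leinster's general result that every symmetric monoidal category has an underlying symmetric multicategory, and then specialise to $(\msccatm,\otimes,\text{I})$. The paper's own proof is a bare citation of \cite[Example~2.1.3 and Definition~2.2.21]{leinster2003higher}; your version simply unpacks that citation --- spelling out the construction, noting that the strict associator, unitors and braiding make the coherence data trivial, and checking that R1--R3 keep everything well defined --- but no genuinely different idea is involved.
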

	\begin{proof}
		\emph{Any monoidal category ($A$,$\otimes$) has an underlying multicategory $C$} \cite[Example 2.1.3]{leinster2003higher}. Moreover,  \emph{any symmetric monoidal category is naturally a symmetric multicategory, via the symmetry maps	$\sigma  \cdot  {-} :  a_{\sigma(1)} \otimes \dots \otimes a_{\sigma(n)} \: \tilde{\longrightarrow} \: a_1 \otimes \dots \otimes a_n$}.\cite[Definition 2.2.21]{leinster2003higher}
	\end{proof}
	
	\textbf{Example} In Figure \ref{fig:wdbox}, there is an arrow $X_1, X_2 \to Y$ in \msccat, which is defined as $X_1 \otimes X_2 \to Y$ in (\msccatm,$\otimes$).
	
	Since we rely on matrix multiplication to define composition in \msccatm, we can redefine the components of \msccat\:using matrix without altering their initial meaning. This might be useful to automate computations with morphisms. Again a connection between morphisms of \msccatm\:and certain matrices over a semiring is presented in Appendix \ref{app:conmormat}.

	\begin{definition}
		\label{def:matmsc}
		\textbf{MatMsc} is an underlying symmetric multicategory of (\msccatm,$\otimes$). 
		
		$\bullet$ Its objects are the same objects as $\msccatm$. An object  $X$ is rewritten as  $(X^{in},X^{out})$.
		
		$\bullet$ Its arrows $\textbf{MatMsc}(a_1,....,a_n;a)$. An arrow $a_1,....,a_n \to a$ is rewritten as follows. Let $b = a_1 \otimes \dots \otimes a_n$,
		$X^{in} = in(b)$, $X^{out} = out(b)$, $Y^{in} = in(a)$ and  $Y^{out} = out(a)$,
		the arrow $b \to a$ in \msccatm, noted  $X \to Y$ is given by the couple of matrices
		$ (M^{in},M^{out})$ where $M^{in}$and $M^{out}$ are given by the functions $X^{in} \times (Y^{in} \cup X^{out}) \to K$ and 
		$Y^{out} \times X^{out} \to K$ respectively.
		Objects and arrows are subject to R1, R2 and R3 as in $\msccatm$.
		
		$\bullet$ Identities $\textbf{MatMsc}(a;a)$. For any object $(X^{in},X^{out})$, the identity arrow\\\hspace*{1cm} $1_X := (1_X^{in}:X^{in} \times (X^{in} \cup X^{out}) \to K, 1_X^{out}:X^{out} \times X^{out}  \to K)$, 
		rewritten as follows. $1_X^{in}$ yields
		\[
		\begin{blockarray}{rcc}
		& X^{in}  & X^{out} \\
		\begin{block}{c(cc)}
		X^{in}  & \mathbb{I}_M       &     \emptyset_M \\
		\end{block}
		\end{blockarray}
		\]
		and $1_X^{out}$ yields
		\[
		\begin{blockarray}{rc}
		& X^{out} \\
		\begin{block}{c(c)}
		X^{out}  & \mathbb{I}_M  \\
		\end{block}
		\end{blockarray}
		\]

		$\bullet$ The composition formula is the same as for $\msccatm$, but rewritten as follows. Consider the objects $X$, $Y$, $Z$ and arrows $n:X \to Y$ and $m:Y \to Z$ given by matrices $ (N^{in},N^{out})$ and $ (M^{in},M^{out})$ respectively. The arrow $m \circ n : X \to Z$ is given by $ (O^{in},O^{out})$: 
		\begin{equation*}
		\begin{array}{rl}
		O^{in} = & N^{in} \times M'^{in} \times N'^{out}\\
		O^{out} =& M^{out} \times N^{out}
		\end{array}
		\end{equation*}
	\end{definition}

	The identity law holds in \textbf{MatMsc} with this rewriting (already shown in relation to \msccatm) i.e. $m \circ  1_X = m = 1_Y \circ m$ whenever  $m:X \to Y$ is an arrow. The matrices defining $m \circ  1_X$ and $1_Y \circ  m$ are both equal to $M$ i.e. $(M^{in},M^{out})$.
	The arrow $m \circ  1_X$ associates $M$ to $m$ and $N$ to $1_X$. Therefore $O^{in}$ equals $N^{in} \times M'^{in} \times N'^{out} =$
	\[
	\begin{blockarray}{rll}
	&X^{in} & X^{out}  \\
	\begin{block}{r(cc)}
	X^{in} & \mathbb{I}_M       &     \emptyset_M \\
	\end{block}
	\end{blockarray}
	\quad \times
	\begin{blockarray}{rcc}
	& Y^{in}\cup X^{out}  & X^{out} \\
	\begin{block}{c(cc)}
	X^{in}  & M^{in}        &     \emptyset_M \\
	X^{out} & \emptyset_M   &  \mathbb{I}_M  \\
	\end{block}
	\end{blockarray}
	\quad \times
	\begin{blockarray}{rcc}
	& Y^{in} & X^{out} \\
	\begin{block}{c(cc)}
	Y^{in}  &  \mathbb{I}_M &     \emptyset_M \\
	X^{out} & \emptyset_M   &  \mathbb{I}_M   \\
	X^{out} & \emptyset_M   &     \mathbb{I}_M \\
	\end{block}
	\end{blockarray} 
	\]
	
	$O^{in}$ is indeed equal to $M^{in}$. We also have  $O^{out}$ equals $M^{out} \times N^{out} =$
	\[
	\begin{blockarray}{rl}
	&X^{out}  \\
	\begin{block}{r(c)}
	Y^{out} & M^{out} \\
	\end{block}
	\end{blockarray}
	\quad \times
	\begin{blockarray}{rl}
	&X^{out}  \\
	\begin{block}{r(c)}
	X^{out} & \mathbb{I}_M \\
	\end{block}
	\end{blockarray}
	\]
	$O^{out}$ is equal to $M^{out}$. Hence, $m \circ  1_X = m$. In the same way: $m= 1_Y \circ  m$.
	
	\textbf{MatMsc} can be seen as the category where objects declare interfaces and where morphisms declare how composite interfaces are internally built from other interfaces. The composition formula lets us safely zoom in and zoom out on levels further down and up in the hierarchy of interfaces.
	
	It is worth noting that with only \textbf{MatMsc} (or \textbf{Msc} or \textbf{Mscm}), no meaning (related to a domain of interest or business domain) is assumed for an interface, its ports and, if applicable, its internal links.

	\subsection{Interpretations of the structure}
	\label{sec:specs}
	The present section proposes an approach that seeks to bridge the gap between a possible semantics (interpretation of the structure) and its implementation in practice.
	A semantics corresponding to an interpretation of the structure may be formally defined in a variety of ways (denotational, axiomatic, operational, etc.) and with numerous possibilities of implementation. This paper is not concerned with defining or implementing a particular semantics, but sets out principles for formally connecting the structure, semantics and their implementation. As mentioned above, the architecture of a system is not about the effective (or final) implementation of the system.

	Using the structure of models, it is possible to specify an interpretation of boxes such that the associated semantics can be recovered. An actual interpretation could therefore be independently developed by using the structure of models. Ideally, the specification of the interpretation is given by a functor $F:\msccat \to \msicat$, $\msicat$ being the category in which the structure is interpreted. The caveat 'ideally' is necessary, because it may not always be straightforward to define a functorial semantics.
	
	\textbf{Examples}
		
		(i) Consider once again a model corresponding to the box shown in Figure \ref{fig:wdbox}. A domain-specific meaning of this model can only be known once the definition of an interpretation of its structure is given. Let us suppose that the domain of interest is classical physics and that we wish to study some physical properties of boxes. In this case, $F$ can be defined as $F: \msccat \to \msicat$. Let $\msicat$ :=\textbf{Set}. Along with objects, set of real numbers, and arrows, $n$-ary functions define the relationship between a physical property of a box and the physical properties of its $n$ constituent boxes. $F$ is a map of multicategories and an algebra for \msccat\:(See \cite[Definition 2.1.9, Definition 2.1.12]{leinster2003higher}). Here, let us take \textit{mass} to be the physical property under consideration.
		
		Let $F$ be  $F_0:\{X_1,X_2,Y\}\to \mathbb{R}_{> 0}$ given by $X \mapsto ma$. The image of $X_1,X_2 \to Y$ is the function $F_0(X_1) \times F_0(X_2)\to F_0(Y)$ given by $(ma_1,ma_2) \mapsto ma$.
		
		Depending on the meaning of links, $ma$ can be equal either to (*) $ma_1+ma_2$  or to (**) $ma_1+ma_2+m_L$, $m_L \in \mathbb{R}_{+}$ .
		In the first case (*), the mass of a composite box will depend solely on the masses of its constituent boxes. In the second case (**), the mass of a box will additionally depend on $m_L$, which is an additional mass emerging from internal connections.

		Similar interpretations can be made for other physical properties.\\
		\begin{figure}[!h]
			\centering
			\includegraphics[scale=0.25]{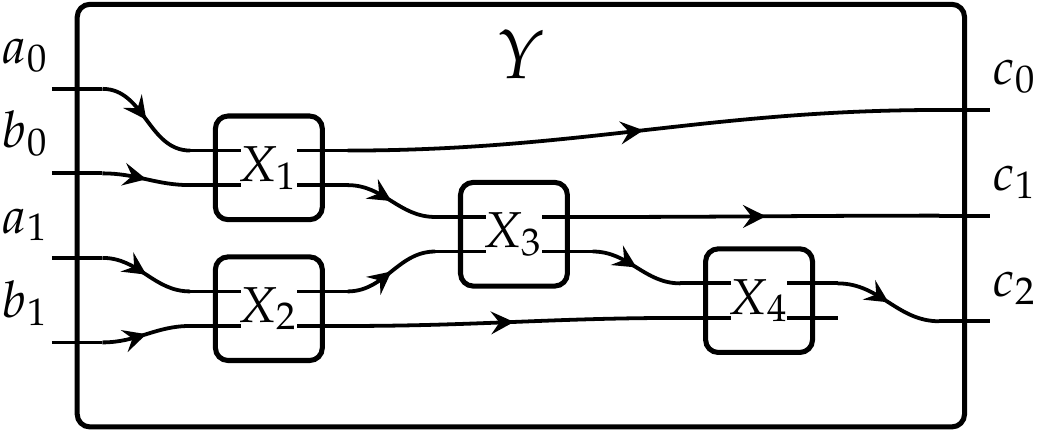}
			\caption{$X_1, X_2, X_3, X_4 \to Y$ \label{fig:add2}}
		\end{figure}
		
		(ii) Consider the arrow in Figure \ref{fig:add2}. Let each of the boxes $X_i, i=1..4$ be interpreted as a \textit{two-digit adder} ($d_0 + d_1=cs$) which returns the sum ($s$) and the carry ($c$) via the top and bottom output ports respectively, given $d_0$ and $d_1$ obtained from top and bottom input ports. Here, the containing box $Y$ can be interpreted as \textit{two-digit numbers adder} ($a_1a_0 + b_1b_0=c_2c_1c_0$). There is a category (\msicat) with five objects and one arrow.
	
	The concept of mass in (i), and the concepts of two-digit adder and two-digit numbers adder in (ii), must be exposed by  $F$, irrespective of how these concepts are defined in \msicat\:and irrespective of the implementation.
	These data must be machine-accessible.
	It is worth noting that the functorial semantics of wiring diagrams (i.e. special cases of \msccat) defined in \cite{rupel2013operad}, \cite{spivak2013operad}, \cite{vagner2015algebras} and \cite{spivak2015nesting} are concerned with the interconnections of discrete-time processes, databases and plug-and-play circuits, differential equations of open dynamical systems, and mode-dependent networks respectively. In each case, a particular algebra is devised which may be implemented as definitions of interpretations of the structure of models.

	\subsection{Usages of the structure}
	\label{sec:structureusage}
	If the structure of a model is to be automatically exchangeable, then in the same way that the definition of an interpretation of the structure needs to be machine-accessible, the related data must also be machine-accessible.
	
	The data relating to an actual (or candidate) system modelled is defined by a Set-valued functor $I:\msccat \to \textbf{Set}$. This links an object $a \in \msccat_0$ to the set of its usages $I_0(a)$, and any arrow $a_1, ... ,a_n \xrightarrow{\theta} a \in C(a_1, . . . ,a_n;a)$ to the function  $I_0(a_1)\times ... \times I_0(a_n) \xrightarrow{I(\theta)} I_0(a)$. Taking the composition and the identity in the same way as in $\textbf{Set}$, $I$ is a map of multicategories \cite[Definition 2.1.9]{leinster2003higher} and an algebra for \msccat\: \cite[Definition 2.1.12]{leinster2003higher}. The usefulness of $I$ stems from the fact that where the structure (\msccat) of a model, together with at least one interpretation ($F$) of this structure, has been established, this is still not quite enough to allow the model to be used. The actual business data corresponding to the components of the structure are also required.  For instance, consider the box shown in Figure \ref{fig:wdbox} and the interpretation given in example (i) above. How do we communicate the fact that the masses of boxes $X_1$ and $X_2$ and $Y$ are given by actual values? No assumptions are made concerning the relations between the different masses. A parallel can be drawn with databases: it is good to have a database schema and to know what tables correspond to in the real world, but in order to use a database we also need to know the contents of the tables (the actual data entries). The purpose of $I$ is to answer this kind of question, by providing instances of the components of a structure of a model.
	
	Let us recapitulate the purpose of the present section. We have provided an informal and a formal understanding of the proposed architecture FA. To be able to read a model (data), it is necessary to have at least the specification of $I:\msccat \to \textbf{Set}$, while to be able to utilize a model (semantics), it is necessary to have at least the specification of $F: \msccat \to \msicat$.\\
	$I$ and $F$ are intended to facilitate the transfer of the  modularity of $\msccat\:$ in different interpretations of $\msccat$. Without functoriality, the modularity and the generality offered by the category-theoretic framework are lost (even though this functoriality is not always easily perceived, and may in some cases be unnecessary).
	
	For exploiting models through a given automated routine, it is necessary to specify
	\begin{equation}
	\msicat \xleftarrow{F} \msccat \xrightarrow{I} \textbf{Set}
	\label{eq:foundmodels}
	\end{equation} and make the definition of interpretation machine-accessible. Nonetheless, wherever the usage of a structure is linked to a particular interpretation of that structure, it can be argued that the effective data related to $\msccat$ should instead be given by an amalgamated sum (a pushout) over (\ref{eq:foundmodels}).

\section{Related research}
\label{sec:relatedresearchchap4}
FMI\footnote{\label{fn:fmiwebsite} https://www.fmi-standard.org/} (Functional Mockup Interface), the standard which attempts to solve the problem stated in Section \ref{sec:problemchap4}, is discussed in Section \ref{sec:r1}. Then other relevant related works that can be correlated to FA are discussed in Section \ref{sec:r2}.

\subsection{FMI}
\label{sec:r1}
FMI is a tool-independent interface that allows models to be exchanged and inter-operated in a co-simulation.
An FMU (Functional Mock-up Unit), one of the components of FMI, must implement the FMI API via XML files and C-code. In FMI a model is associated with a particular FMU, which is generated using different tools.
The kinds of models covered by FMI are (co) simulation models \cite{blochwitz2012functional}. This paper is not primarily concerned with how models co-operate. Nevertheless, the way a model is defined (such as via an FMU) will influence how it can be reused.

An FMU separates the description of interface data (via an XML -- Extensible Markup Language -- file) from its mode of operation (via a program written in the C programming language).
The interface data describes the static elements of the model such as type definitions (Real, Integer etc.), model variables,  model structure (an ordered list of outputs termed \textit{Outputs}, states termed \textit{Derivatives}), and some initially unknown data termed \textit{InitialUnknowns}), along with units etc.\\
The C-code and header files define the functions to be implemented by an FMU. These functions are used to simulate a model described according to a particular FMU. Generally speaking, the functions are for purposes such as initializing the FMU, assigning values to an input variable, retrieving the value of an output variable, changing the state of the FMU, etc.
More information about the evolving specification of the FMI standard and supported tools can be found on the FMI website.

Although the FMI standard provides an interface to be implemented by FMUs, one concern is that there can be semantic gaps (studied in \cite{tripakis2015bridging} \cite{cremona2016fide}) between different model semantics (e.g, discrete events, dataflow models) or modelling languages and the target interface  \cite{tripakis2016compositionality}.  The semantics of the target interface has been considered as a timed Mealy machine \cite{tripakis2015bridging}.
Another concern is the extent to which the interface can be used to capture the semantics of different kinds of models \cite{tripakis2016compositionality}. Attempts to address this problem have resulted in various (proposed) adaptations of FMI \cite{cremona2016fide} \cite{cremona2019hybrid}. Therefore, the specification and definition of the common interface are crucial. The FMI standard is mainly concerned with the exchange of models and their simulation as black boxes.

In this paper we adhere to the FMI standard in separating structure and function. But, unlike in FMI, we give a formal definition of FA. The structures of models are considered as composite components defined within a category-theoretic framework, and a predefined interface for the implementation of functions is not imposed. Any implementation has to be derived from an interpretation of the structure of models. The (software) code corresponding to an implementation must expose the functions whose inputs and outputs are correlated to an interpretation of the structure. The data corresponding to the instantiations (i.e. actual serialized models) of the structure are elements that should be persisted as a file.

\subsection{Other works}
\label{sec:r2}
The kind of descriptive model that we are talking about is close, in terms of model structure (or abstract syntax), to the component-based, modular, and even composite-like models that are habitually encountered in system and software modelling, simulation, and more generally, analysis. See for instance \cite{hardebolle2008modhel} \cite{tripakis2013modular} for approaches and tools that deal with this kind of model structure. Although these approaches address modelling and simulation, their ultimate aim is not to solve the problem that is our concern here. They generally introduce their own approach (mainly on modelling formalisms) to system modelling and analysis.
Metamodels for this kind of structure have been presented by \cite{lee2010disciplined} \cite{hardebolle2008modhel}. Their approaches advocate hierarchical and component-based design in systems modelling and engineering.

From a practical (yet formal) point of view, a box (a component of FA) is analogous to an \textit{actor} in \cite{tripakis2013modular}. In \cite{tripakis2013modular}, boxes are encoded using the logic of the modelling tool Ptolemy and the Java programming language. This may hamper the reuse of models, albeit some of these models may be imported using FMI.

In the present paper, unlike in \cite{tripakis2013modular}, a box does not correspond to a particular semantics (i.e. the target of an interpretation of the structure). For \cite{tripakis2013modular}, boxes or actors are seen as extended timed state machines, since the authors' focus is on behavioural models. The corresponding semantics is also called the actor interface, defined in a way that includes both the structure and the semantics of the box. However, the actor interface can also be considered as the target of an interpretation of the structure of a box, since the input and output ports of a box are explicit in this interface. In \cite{tripakis2013modular} composite actors are basically considered as a set of actors, whereas for us a composite box is an arrow in \msccat\: (see Section \ref{sec:applicationsct}).
It will be noted that \cite{tripakis2013modular} is mainly concerned with a unified description of the behavioural semantics of a composite that comprises more than one actor.

It might be suggested that graphs are a way of formalizing the structure of models. Although graphs are very useful in representing interconnected objects, they lack the sort of intrinsic feature for dealing with the hierarchical composite structure of vertices (in the present context vertices equate to boxes) that we require. In order to represent this kind of structure, the graph would need to include additional data, which might complicate the formalization.

In fact, hierarchical graphs have also been formalized in category theory to model hierarchical structures \cite{Milner08} (where coupling of hyper-graphs and forests yields bigraphs) \cite{BruniCGLM11} (graphs inside graphs). Although hierarchical graphs encapsulate the hierarchy of nodes (here, boxes) and the communication links between nodes, the nature and structure of nodes and node containment (nesting) and links between nodes do not have the same meaning as described in Section \ref{sec:modelorganisationstructure}. In the present work a node is given by input and output ports. The source and target of a link are ports at the same level of hierarchy of a box. We do not use two separate structures to define boxes (objects) and as a basis for composite boxes (arrows), and in this our approach contrasts with the two-structure approaches of \cite{Milner08} (place graph and linking graph ) and \cite{BruniCGLM11} (graph layer and hierarchical graph).

Our inspiration for the category structure considered in this paper is, rather, to be found in \cite{spivak2013operad}  \cite{rupel2013operad}, \cite{vagner2015algebras} and \cite{luzeaux2015formal}. 
\cite{luzeaux2015formal} considers symmetric monoidal and compact closed categories, with associated logics and proof theories as a possible formal foundation for systems engineering, given that it is compatible with current modelling languages and tools and encompasses existing formalisms.
The mathematical object presented in this paper for encapsulating the structure of models, a symmetric multicategory (see Section \ref{sec:applicationsct}), is closely related to the Wiring Diagrams (WD) structure developed in a series of papers: \cite{rupel2013operad}, \cite{spivak2013operad},  \cite{vagner2015algebras} etc. The two big differences between the WD structure and our structure are the following:

(i) Our structure is a more general construction supported by the "wiring" or connection pattern, i.e. almost all possible ways of connecting sub-components (or constituents) of a composite component are allowed, unlike WD (see Section \ref{sec:modelstructure} below for details of what WD does not allow). A WD is basically given by
\begin{equation}
(X^{in} + X^{out}) \xrightarrow{} (Y^{in} + X^{out})\xleftarrow{} (Y^{in} + Y^{out})
\label{eq:wd}
\end{equation} 
which can be decomposed into two functions
\begin{equation}
\begin{array}{rl}
\phi^{in} :& X^{in} \to  Y^{in} + X^{out}\\
\phi^{out} :& Y^{out} \to  X^{out}
\end{array}
\label{eq:wdfunctions} 
\end{equation}

To obtain (\ref{eq:arrowinmsc1}) -- our definition -- from (\ref{eq:wdfunctions}), the elements of $\phi^{in}$ and $\phi^{out}$ must be associated to the abstract sets of links or wires $L^{in}$ and $L^{out}$. To get (\ref{eq:wdfunctions}) from (\ref{eq:arrowinmsc1}) it is necessary to have $\theta^{in}$ and $\theta^{out}$ of (\ref{eq:arrowinmsc1}) that are injective. The elements of $L^{in}$ and $L^{out}$ can be seen as the labels of wires.

(ii) We are not concerned with a particular interpretation (e.g. an algebra) of the structure of models.

	\section{Summary and perspective}
	\label{sec:conclusionchap4}
	Our aim was to provide a federated architecture for descriptive models in order to facilitate their development, sharing and utilization. Though clear and precise definitions of models are essential, how such definitions are conveyed in practice, particularly in distributed, collaborative and computer-aided environments, is also essential, and crucial to some extent. In order to bridge the gap between theoretical and practical considerations, a federated architecture (FA) was presented via the specification: $\msicat \xleftarrow{F} \msccat \xrightarrow{I} \textbf{Set}$ (\ref{eq:foundmodels}).
	Categorical settings are useful for abstracting away from details, and for differentiating and correlating structures of models and their usages. Since there are well-known basic data structures, algorithms and software libraries related to matrices, their use may foster calculations with FA.
	Setting up this framework can make it possible to communicate and utilize different descriptive models, transparently or independently of the tool used to build them and without sacrificing the definitions of models.
	
	One question not addressed in this paper is the way in which models change and evolve over time. This aspect might be manageable canonically using the structure of models. It would be possible to include within a structure an `off' object, the `black hole' (to model a destroyed box) yielding a new structure $K$. A functor from a category $K_t$ at time $t$ to the same category $K_{t'}$ at time $t'> t$ could then be defined. By taking the time instants as objects in a category $T$ featuring some order, and defining a functor from $T$ to $K$, we would then be in a position to address change and evolution in models with the structure K\cite{ehresmann1987hierarchical}.

\bibliographystyle{eptcs}

\section{Appendix}
\subsection*{Connection between morphisms of \msccatm\:of and certain matrices over a semiring}
\label{app:conmormat}
	The connection assumed in Definition \ref{def:defmscm} is established and proved in three steps as follows. (1) an equivalence between certain functions and matrices over a semiring is proved. Then (2) certain functions are shown to be equivalent to the morphisms of \msccatm, by rewriting the signatures of morphisms' component functions ($f^{in}$ and $f^{out}$). Finally, it is deduced (3) that a morphism of \msccatm\: yields a matrix being one of these certain matrices and the converse.  Computations relating to morphisms are then made via matrices.
	Finally, matrix multiplication is used to obtain the composition formula of \msccatm.

	Let $K$ be the power set of the set of strings of finite lengths. $(K,\cup,\cdot,\emptyset,\{\epsilon\})$ is a semiring \cite{Droste2009} where $\times := \cdot $ is the product induced by the string concatenation operator, $+: = \cup$ is the addition given by union of sets of strings of finite lengths, $0 := \emptyset$ is the zero given by the empty set, and $1 := \{\epsilon\}$ is the unit given by the singleton set containing the empty string.  Let also $I$ and $J$ be the finite sets with $n$ and $m$ elements respectively.

	\begin{proof}[Proof of (1)]
		Let $f : I \times J \to K$  be a function. Then $M(n,m)$ is a matrix, given by $M=a_{ij}, a_{ij}\in K, i=1..n, j=1..m$. Conversely, given a matrix $M=a_{ij}, a_{ij}\in K, i=1..n, j=1..m$, the corresponding function is $f:I \times J \to K$ given by $(i,j) \mapsto  a_{ij}$. 
		This completes the proof of (1).
	\end{proof}

	\begin{proof}[Proof of (2)]
		The data (\ref{eq:arrowinmsc1}) of an arrow can be rewritten as follows.
		We define two functions $f^{out}$, $f^{in}$
		\begin{equation}
		\begin{array}{rrrl}
		f^{out} :& Y^{in} \times (Y^{out} \cup Z^{in})& \to& 2^{L^{in}}\\
		f^{in} :& Z^{out} \times Y^{out}& \to& 2^{L^{out}}
		\end{array}
		\label{eq:arrowinmsc4} 
		\end{equation}\\
		where $Y^{in} = in(b)$, $Y^{out} = out(b)$,  $Z^{in} = in(a)$ and $Z^{out} = out(a)$. $2^L$ is the power set.
		Like the functions $\theta^{in}$ and $\theta^{out}$ that are total functions, $f^{out}$ and $f^{in}$ are also total functions given by
		associating with each element $\alpha$ of their domain the set of elements that are each the preimage of $\alpha$  in $\theta^{in}$ and $\theta^{out}$ respectively.
		From (\ref{eq:arrowinmsc1}), we have $2^{L^{in}}, 2^{L^{out}}\subset K$.
		Hence, the functions $\theta^{in}$ and $\theta^{out}$ are equivalent via ($f^{out}$ and $f^{in}$) to certain functions $f: I \times J \to K$.
		This proves (2).
	\end{proof}
	
	\begin{proof}[Proof of (3)]
		(3) follows from (1) and (2).
	\end{proof}
	
	Henceforth, matrix multiplication is used for the composition formula of \msccatm.
	
	\textbf{Example} Let us consider $f_A : I_1 \times S\to K$, and   $f_B : S \times I_2 \to K$ where $I_1=\{a,b\}$, $S = \{c,d\}$ and $I_2=\{e,f\}$.\\
		Suppose given \\
		$f_A := \{(a,c) \mapsto \{w_1\}, (a,d) \mapsto \emptyset, (b,c) \mapsto \{w_2,w_3\},(b,d) \mapsto \emptyset\}$\\
		$f_B := \{(c,e) \mapsto \{w_5\}, (c,f) \mapsto \emptyset, (d,e) \mapsto \emptyset,(d,f) \mapsto \emptyset\}$
		
		The matrices $A$ and $B$ corresponding to the functions $f_A$ and $f_B$ are as follows.
		\[
		A=\begin{blockarray}{rcc}
		& c & d \\
		\begin{block}{c(cc)}
		a & \{w_1\} & \emptyset  \\
		b & \{w_2,w_3\} & \emptyset \\
		\end{block}
		\end{blockarray}
		\quad
		B=\begin{blockarray}{rcc}
		& e & f \\
		\begin{block}{c(cc)}
		c & \{w_5\} & \emptyset  \\
		d & \emptyset & \emptyset  \\
		\end{block}
		\end{blockarray}
		\]
	
	The matrix $A$ means in practice that there is a link or wire ($w_1$) coming from the port $c$ and supplying the port $a$. There are also two links ($w_2$ and $w_3$) coming from $c$ and supplying $b$. The product $A \times B$ given by
	\[
	A \times B =\begin{blockarray}{rcc}
	& e & f \\
	\begin{block}{c(cc)}
	a & \{w_1w_5\} & \emptyset  \\
	b & \{w_2w_5,w_3w_5\} & \emptyset \\
	\end{block}
	\end{blockarray}
	\]
	shows in practice that there is also a link $(w_1w_5$) coming from $e$ and supplying $a$. Typically, a port will be either an input or output port of a box.
\end{document}